\documentclass{article}
\usepackage{graphicx} 
\usepackage{amsmath,amsfonts,amssymb,amsthm}
\usepackage{physics}
\usepackage[
  left=1.25in,
  right=1.25in,
  top=1in,
  bottom=1in,
]{geometry}
\usepackage{authblk}
\usepackage[english]{babel}
\usepackage[colorlinks,allcolors=blue]{hyperref}
\usepackage[sorting=none]{biblatex}
\title{No cardinality bound for squashed entanglement }

\author[1,2]{Rabsan Galib Ahmed}
\author[1,2]{Graeme Smith}
\affil[1]{Department of Applied Mathematics, University of Waterloo, Ontario, Canada}
\affil[2]{Institute for Quantum Computing, University of Waterloo, Ontario, Canada}
\date{September 12, 2026}
\newtheorem{theorem}{Theorem}

\newtheorem{lemma}[theorem]{Lemma}

\begin{document}

\maketitle
\begin{abstract}
    Squashed entanglement is an additive bipartite entanglement measure. For a state $\rho_{AB}$, it is defined as the infimum of all conditional mutual information $I(A:B\mid E)$ evaluated on extensions $\rho_{ABE}$. It was unresolved whether there is a bound on the dimension of the conditioning system, $E$, needed for this optimization. We show that no such cardinality bound is possible. Explicitly, we find that the squashed entanglement for a partially dephased Bell pair on two qubits cannot be achieved for any finite dimensional extension $E$. A crucial ingredient of this proof is a direct sum step, which in a sense, symmetrizes two purifying systems while keeping the conditional mutual information unchanged and the coherence no worse.   
\end{abstract}

\section{Preliminaries}
For a tripartite state on $ABE$, the \textit{conditional mutual information} (CMI) between systems $A$ and $B$, conditioned on $E$ is defined as follows
\begin{align}
    I(A:B\mid E) = S(AE) + S(BE) - S(ABE) - S(E).
\end{align}
Among many entanglement monotones available in the literature, squashed entanglement stands out because it is additive, faithful, and continuous~\cite{Christandl2004-sn,Brand_o_2011,Alicki2004-xl}. The squashed entanglement of a bipartite state $\rho_{AB}$ is defined as
\begin{align}
    E_{\text{sq}}(\rho_{AB}) = \inf \left\{\frac{1}{2}I(A:B\mid E)_{\omega_{ABE}}\text{ such that $\Tr_E \omega_{ABE} = \rho_{AB}$}\right\}.
\end{align}
It is not known whether for an arbitrary bipartite state on finite dimension, this infimum can be achieved with an extension on a finite dimensional conditioning system $E$~\cite{Beigi_2014}. In this article we illustrate a counterexample to any possibility of a cardinality bound.  

First we define a \textit{dimensionally bounded squashed entanglement}, as: 
$E_{D}(\rho_{AB}) = \inf \{\frac{1}{2}I(A:B\mid E)_{\omega_{ABE}}\text{ such that $\Tr_E \omega_{ABE} = \rho_{AB}$ and $\dim E \leq D$}\}.$
By definition, $E_{\text{sq}}(\rho_{AB}) = \inf_{D\geq 1} E_D(\rho_{AB})$. In the following we give an example of a state, $\rho$, such that for all $D \ge 1$ there exists $D'>D$ such that $E_{D'}(\rho) < E_D(\rho)$. This implies that the infimum is never achieved for any finite dimensional extension.

The following trace inequality will be useful for proving the main result, so we include it here. Closely related inequalities can be found in~\cite{Araki1981-hi,Kittaneh1986-wt}. 
\begin{lemma}\label{lem: tr-ineq}
    For any two matrices $A_0$ and $A_1$ of the same order,
    \begin{align}
        2 |\Tr (A_0 A_1^\dagger)| \leq \Tr (\sqrt{A_0A_0^\dagger} \sqrt{A_1 A_1^\dagger} + \sqrt{A_0^\dagger A_0} \sqrt{A_1^\dagger A_1}).
    \end{align}
\end{lemma}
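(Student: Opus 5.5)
We will use polar decompositions to split the trace into two symmetric halves and then apply Cauchy--Schwarz. Write $A_0 = U_0 P_0$ and $A_1 = U_1 P_1$ with $U_i$ unitary and $P_i = \sqrt{A_i^\dagger A_i} \ge 0$. Then $\sqrt{A_i A_i^\dagger} = U_i P_i U_i^\dagger$, and
\begin{align}
\Tr(A_0 A_1^\dagger) = \Tr(U_0 P_0 P_1 U_1^\dagger) = \Tr\bigl( (P_0^{1/2} U_1^\dagger U_0 P_0^{1/2})^{\dagger\dagger}\bigr)\ \text{(suitably arranged)}.
\end{align}
More concretely, we factor each $P_i = P_i^{1/2} P_i^{1/2}$ and write
\begin{align}
\Tr(A_0 A_1^\dagger) = \Tr\bigl( (P_1^{1/2} U_1^\dagger U_0 P_0^{1/2}) (P_0^{1/2} P_1^{1/2}) \bigr).
\end{align}
Set $X = P_0^{1/2} U_0^\dagger U_1 P_1^{1/2}$ and $Y = P_0^{1/2}P_1^{1/2}$. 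Then $\Tr(A_0A_1^\dagger) = \Tr(X^\dagger Y)$.

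Cauchy--Schwarz for the Hilbert--Schmidt inner product gives $|\Tr(X^\dagger Y)| \le \|X\|_2\|Y\|_2$. Applying AM--GM, $2\|X\|_2\|Y\|_2 \le \|X\|_2^2 + \|Y\|_2^2$.

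It remains to compute the two norms. For $Y$,
\begin{align}
\|Y\|_2^2 = \Tr(P_1^{1/2}P_0 P_1^{1/2}) = \Tr(P_0P_1) = \Tr\bigl(\sqrt{A_0^\dagger A_0}\sqrt{A_1^\dagger A_1}\bigr).
\end{align}
For $X$,
\begin{align}
\|X\|_2^2 = \Tr(P_1^{1/2} U_1^\dagger U_0 P_0 U_0^\dagger U_1 P_1^{1/2}) = \Tr(U_0P_0U_0^\dagger \, U_1 P_1 U_1^\dagger) = \Tr\bigl(\sqrt{A_0A_0^\dagger}\sqrt{A_1A_1^\dagger}\bigr).
\end{align}
Adding these two expressions yields exactly the claimed inequality.

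The only point needing care is the bookkeeping that makes $\Tr(X^\dagger Y)$ equal $\Tr(A_0A_1^\dagger)$. Checking by cyclicity,
\begin{align}
\Tr(X^\dagger Y) = \Tr(P_1^{1/2}U_1^\dagger U_0 P_0^{1/2} P_0^{1/2} P_1^{1/2}) = \Tr(U_1^\dagger U_0 P_0 P_1) = \Tr(U_0P_0P_1U_1^\dagger) = \Tr(A_0A_1^\dagger),
\end{align}
as required. For non-square matrices of the same order $m\times n$, we replace the unitaries by partial isometries from the polar decomposition. We still have $\sqrt{AA^\dagger}=UPU^\dagger$, and $U^\dagger U$ acts as the identity on the support of $P$. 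Hence the same computation goes through unchanged; alternatively, one can pad the matrices with zeros to make them square.
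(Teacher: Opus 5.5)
Your proof is correct and is essentially the paper's argument. In the paper's SVD notation $A_i = U_iD_iV_i$, your polar factors are $P_i = V_i^\dagger D_i V_i$, and your $X$ and $Y$ are exactly the paper's $X^\dagger$ and $Y^\dagger$ multiplied by $V_0^\dagger$ on the left and $V_1$ on the right. So the Cauchy--Schwarz and AM--GM steps coincide; you just chain them in the reverse order.

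The one blemish is your first display marked ``suitably arranged,'' which is false as written because it loses the factor $P_1$. Delete it: your ``more concretely'' computation already replaces it.
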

    \begin{proof}
        First, using the AM-GM inequality, we obtain
        \begin{align}
             &\Tr (\sqrt{A_0A_0^\dagger} \sqrt{A_1 A_1^\dagger} + \sqrt{A_0^\dagger A_0} \sqrt{A_1^\dagger A_1})\nonumber\\ \geq &\;2 \left[\Tr (\sqrt{A_0A_0^\dagger} \sqrt{A_1 A_1^\dagger})\Tr (\sqrt{A_0^\dagger A_0} \sqrt{A_1^\dagger A_1})\right]^{1/2}.
        \end{align}
        Now consider the singular value decompositions, $A_i = U_i D_i V_i$. Then the two trace terms above are reduced to 
        \begin{align}
            \Tr (\sqrt{A_0A_0^\dagger} \sqrt{A_1 A_1^\dagger}) = \Tr (U_0 D_0U_0^\dagger U_1D_1U_1^\dagger) = \Tr (\underbrace{\sqrt{D_0}U_0^\dagger U_1\sqrt{D_1}}_{X^\dagger} \underbrace{\sqrt{D_1}U_1^\dagger U_0\sqrt{D_0}}_{X})\nonumber\\
            \Tr (\sqrt{A_0^\dagger A_0} \sqrt{A_1^\dagger A_1}) = \Tr (V_0^\dagger D_0V_0 V_1^\dagger D_1V_1) = \Tr (\underbrace{\sqrt{D_0}V_0 V_1^\dagger\sqrt{D_1}}_{Y^\dagger} \underbrace{\sqrt{D_1}V_1 V_0^\dagger\sqrt{D_0}}_{Y}).\nonumber
        \end{align}
        Applying the Cauchy-Schwarz inequality, $|\Tr (Y^\dagger X)|\leq [\Tr (X^\dagger X) \Tr (Y^\dagger Y)]^{1/2}$, we obtain
        \begin{align}
            &\left[\Tr (\sqrt{A_0A_0^\dagger} \sqrt{A_1 A_1^\dagger})\Tr (\sqrt{A_0^\dagger A_0} \sqrt{A_1^\dagger A_1})\right]^{1/2} \nonumber\\
            \geq & |\Tr (\sqrt{D_0}V_0 V_1^\dagger\sqrt{D_1} \sqrt{D_1}U_1^\dagger U_0\sqrt{D_0})|\nonumber\\
            = & |\Tr (U_0 D_0 V_0 V_1^\dagger D_1U_1^\dagger)| = |\Tr (A_0 A_1^\dagger)|.
        \end{align}
        Combining everything, we have the claimed trace inequality.
    \end{proof}

First developed in~\cite{Siddhu_2021}, entropic singularity has emerged as a useful tool in quantum information theory~\cite{Singh_2022,Platypus}. To summarise, consider states $\rho$ and $\sigma$ acting on a Hilbert space $\mathcal{H}$ with $\mathrm{rank} \;\rho  < \dim \mathcal{H}$ while $\mathrm{rank}\;\sigma = \dim \mathcal{H}$, i.e., $\sigma$ is full rank while $\rho$ is rank-deficient. Moreover, for some $0<t<1$, consider the state $\rho_t = (1-t)\rho + t\sigma$. Entropic singularity is the following observation: $S(\rho_t) = S(\rho) + (\Tr{K\sigma}) \; t\log (1/t) + O(t)$, where $K$ is the projector on $\ker \rho$. For sufficiently small $t$, the $t\log (1/t)$ term dominates over $O(t)$ which leads to higher gain in entropy for rank deficient states in comparison to the full rank states upon such mixing. This constitutes a crucial step in the argument. 

All logarithms are in base $2$ and $h_2(p):= -p\log p - (1-p)\log (1-p)$ denotes the binary Shannon entropy.
\section{Main result}
For $0<c<1$, consider the following state on the two qubits $A$ and $B$:
\begin{align}
    \rho_{c} = \frac{1}{2}\left(\ketbra{00}{00} + c\ketbra{00}{11} + c\ketbra{11}{00} + \ketbra{11}{11}\right).
\end{align}
This is a dephased Bell pair on two qubits with a coherence parameter, $c$, and constitutes our counterexample, as proved below.
\begin{theorem}
 For all $0<c<1$ and positive integers $D$, $E_{3D+2} (\rho_c) < E_{D}(\rho_c)$.   
\end{theorem}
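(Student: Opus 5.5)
The plan is to take an optimal extension at dimension $D$ and explicitly build a better one on a space of dimension $3D+2$. First I show that $E_D(\rho_c)$ is attained. Every extension with $\dim E\le D$ arises from a purification $\ket{\psi}_{ABR}$ of $\rho_c$, with $R$ a qubit since $\rho_c$ has rank two, followed by a channel $R\to E$. Channels with output dimension at most $D$ form a compact set, and CMI is continuous, so some $\omega_{ABE}$ with $\dim E\le D$ achieves $E_D(\rho_c)$.

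Next comes the direct sum step. I would write $\omega_{ABE}=\sum_{ij}\ketbra{ii}{jj}\otimes(\ldots)+\ldots$ and record the operators $A_0,A_1$, the "$\ket{00}$" and "$\ket{11}$" branches of a purification of $\omega_{ABE}$ viewed as maps into $E$. The coherence of $\rho_c$ is then $c=2\,\mathrm{Re}\Tr(A_0A_1^\dagger)$ up to normalisation. Using the symmetry of $\rho_c$ under $X\otimes X$ (which exchanges $00\leftrightarrow 11$) together with complex conjugation, I form the flagged extension $\tfrac12\omega\oplus\tfrac12\omega'$ on $E\oplus E'$, of dimension $2D$, where $\omega'$ is the symmetry-transformed copy. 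Because the two blocks have orthogonal supports on $E$, the CMI is the average of the two CMIs and hence unchanged. The resulting extension is symmetric, with the roles of $A_0$ and $A_1$ interchanged across the two blocks. Lemma~\ref{lem: tr-ineq} enters here: it guarantees that the symmetrized state still reproduces a coherence at least $c$, which is the sense in which the coherence is "no worse." Since we need exactly coherence $c$, any excess can be removed by a small controlled dephasing of the flag.

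Then I use entropic singularity to lower the CMI with the remaining $D+2$ dimensions. Since $\dim E\le 2D$ and the purifying system has dimension two, $\omega_{ABE}$ and $\omega_E$ are rank-deficient relative to the space obtained by adjoining $D+2$ fresh dimensions to $E$. I mix in, with weight $t$, a second extension $\tau_{ABE''}$ of $\rho_c$ that is cleverly overlapping: supported on the new $D+2$ dimensions but coherently coupled to the old block, so that the marginal on $AB$ stays exactly $\rho_c$. The perturbed state is
\[
\omega_t=(1-t)\,\omega+t\,\sigma,
\]
where $\sigma$ is chosen with $\Tr_E\sigma=\rho_c$ and $\sigma$ full rank on the kernels of $\omega_{ABE}$ and $\omega_E$. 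By entropic singularity,
\[
I(A:B\mid E)_{\omega_t}=I(A:B\mid E)_\omega-\big[\Tr(K_{ABE}\sigma)+\Tr(K_E\sigma)-\Tr(K_{AE}\sigma)-\Tr(K_{BE}\sigma)\big]\,t\log(1/t)+O(t).
\]
It therefore suffices to find $\sigma$ making the bracket strictly positive. I would try $\sigma$ built from the symmetrized $A_0,A_1$ so that it lives in $\ker\omega_{ABE}$ but largely inside the supports of $\omega_{AE}$ and $\omega_{BE}$. The two extra dimensions beyond $D$ would host the "$00$" and "$11$" corrections needed to keep the $AB$ marginal fixed.

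The main obstacle is this last step: showing that the bracket can be made positive while still extending $\rho_c$. One must rule out the degenerate case where $\ker\omega_{ABE}$ is swallowed by $\ker\omega_{AE}$ and $\ker\omega_{BE}$. I expect to do this by contradiction. If every admissible $\sigma$ gave a nonpositive bracket, the symmetric optimal $\omega$ would have to saturate Lemma~\ref{lem: tr-ineq}, forcing an aligned singular-value structure of $A_0,A_1$. That structure in turn would force the coherence to equal $1$ (a pure Bell extension) or the state to be separable, contradicting $0<c<1$ because $E_{\text{sq}}(\rho_c)>0$ by faithfulness. Taking $t$ small then gives $E_{3D+2}(\rho_c)<E_D(\rho_c)$.
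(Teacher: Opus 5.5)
\textbf{There is a genuine gap: the step that produces a strictly better extension is not established, and the mechanism you propose for it fails in general.} Your attainment step is fine. The problem is the entropic-singularity step.
\begin{itemize}
\item At order $t\log(1/t)$ the entropic-singularity expansion sees only the diagonal kernel weights $\Tr(K\sigma)$. Any coherent coupling of $\sigma$ between old and fresh dimensions enters only at order $t$.
\item Suppose the (symmetrized) minimizer has $\omega_{ABE}$ full rank on $\mathcal{M}\otimes E$, where $\mathcal{M}=\mathrm{span}\{\ket{00},\ket{11}\}$. Suppose also $\omega_{AE},\omega_{BE},\omega_E$ are full rank on $A\otimes E$, $B\otimes E$, $E$. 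Nothing in your argument excludes this; for instance, a channel from the purifying qubit into $E$ can have Kraus rank $2D$.
\item In that situation, for every $\sigma$ extending $\rho_c$ on $E\oplus E''$, all four kernel weights equal $w$, the weight of $\sigma$ on the fresh dimensions $E''$. Your bracket is then $w+w-w-w=0$ identically.
\item Your closing contradiction does not rescue this. You give no reason why a nonpositive bracket would force equality in Lemma~\ref{lem: tr-ineq}. Moreover, equality does not force $c=1$ or separability. For $D\ge 2$, take branches $M_0,M_1$ diagonal with nonnegative entries, $\Tr M_i^\dagger M_i=\tfrac12$ and $\Tr M_0M_1^\dagger=\tfrac c2$. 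These saturate the lemma for every $0<c<1$.
\item Your symmetrization (mixing $\omega$ with its $X\otimes X$, complex-conjugated copy) has coherence exactly $c$ for trivial reasons. So Lemma~\ref{lem: tr-ineq} plays no role there, and no rank gap is created.
\end{itemize}

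\textbf{The missing idea is the paper's.}
\begin{itemize}
\item The direct sum must symmetrize the conditioning system $E$ with the purifying system $F$ of $\omega_{ABE}$. Write $\ket{\omega}=\ket{00}|M_0\rangle\!\rangle+\ket{11}|M_1\rangle\!\rangle$ and set $P_i=\tfrac{1}{\sqrt2}(\sqrt{M_iM_i^\dagger}\oplus\sqrt{M_i^\dagger M_i})$ on a $3D$-dimensional $G$. The pure state $\ket{00}|P_0\rangle\!\rangle+\ket{11}|P_1\rangle\!\rangle$ has the same CMI. Its $AB$-marginal is $\rho_{c'}$ with $c'=2\Tr P_0P_1\ge c$ by Lemma~\ref{lem: tr-ineq}.
\item Excess coherence is then the source of the gain, not a nuisance. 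Write $\rho_c=\tfrac{c}{c'}\rho_{c'}+(1-\tfrac{c}{c'})\rho_0$ and attach a classical, zero-CMI extension of $\rho_0$ on two extra dimensions. This multiplies the strictly positive CMI by $c/c'<1$.
\item If $c'=c$ but $\ker P_0\neq\ker P_1$, a kernel perturbation of size $\epsilon$ raises $c'$ linearly at a CMI cost of only $o(\epsilon)$.
\item Your full-rank configuration always lands in one of these two coherence-raising cases. There $\sum_i M_i^\dagger M_i$ has rank $2D$, while each $M_i:F'\to E$ has a kernel of dimension at least $D$. So $\ker M_0\neq\ker M_1$, and hence $\ker P_0\neq\ker P_1$.
\item Only when $\ker P_0=\ker P_1$ is entropic singularity used, and then the symmetric structure supplies the rank gap you need. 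We get $\mathrm{rank}\,\phi_{ABG}=N$ inside the $2N$-dimensional $\mathcal{M}\otimes G$, while $\phi_{AG},\phi_{BG},\phi_G$ are full rank. Mixing with $\rho_c\otimes\mathbb{I}_G/N$ therefore makes your bracket equal to $b>0$.
\end{itemize}
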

\begin{proof}
 Let $\omega_{ABE}$ be an extension of $\rho_c$ by a $D$-dimensional system, $E$. As $\rho_c$ has support on a $2$ dimensional subspace of the two-qubit space, we can purify $\omega_{ABE}$ by a system $F$ such that $\dim F = 2D$. Clearly, we can write such a purification as
    \begin{align}
        \ket{\omega}_{ABEF} = \ket{00}_{AB} |M_0\rangle\!\rangle_{EF} + \ket{11}_{AB}|M_1\rangle\!\rangle_{EF}.
    \end{align}
    Here $|M_i\rangle\!\rangle_{EF} = (M_i\otimes \mathbb{I}_F) \sum_{j=1}^{2D} \ket{jj}_{F'F}$, for some linear map $M_i: F'\to E$, with $F'\cong F$.

    Using the identities: $\Tr_F{| A \rangle \!\rangle \langle\!\langle B|} = AB^\dagger$ and $\Tr_E{| A \rangle \!\rangle \langle\!\langle B|} = (B^\dagger A)^T$, we can write down the relevant reduced states needed for the evaluation of the CMI:
    \begin{align}
        \omega_{AE} &= \ketbra{0}{0}_A \otimes M_0M^\dagger_0 + \ketbra{1}{1}_A \otimes M_1M_1^\dagger,\\
        \omega_{BE} &= \ketbra{0}{0}_B \otimes M_0M^\dagger_0 + \ketbra{1}{1}_B \otimes M_1M_1^\dagger,\\
        \omega_{E} & = M_0M^\dagger_0 + M_1M^\dagger_1,\\
        \omega_F & = (M^\dagger_0 M_0 + M^\dagger_1M_1)^T.
    \end{align}
    Furthermore, $\omega_{AB} = \rho_c$ implies that $\Tr M^\dagger_0 M_0 = \Tr M^\dagger_1 M_1 = \frac{1}{2}$ and $\Tr M^\dagger_1 M_0 = \Tr M^\dagger_0 M_1=\frac{c}{2}$. Therefore, we can evaluate the CMI as
    \begin{align}
        &I(A:B\mid E)_{\omega}\nonumber\\ = \;&S(AE) + S(BE) - S(ABE) - S(E)\nonumber\\
        = \;&S(AE) + S(BE) - S(F) - S(E)\nonumber\\
        = \;&2S(M_0M^\dagger_0 \oplus M_1M^\dagger_1) - S(M^\dagger_0 M_0 + M^\dagger_1M_1) - S(M_0M^\dagger_0 + M_1M^\dagger_1).
    \end{align}
    For the third equality we use the fact that as $\ket{\omega}_{ABEF}$ is a pure state, $S(ABE)=S(F)$. We note that $(I:B\;|\;E)_{\omega} > 0 $. This is because
    \begin{align}
        I(A:B\mid E)_{\omega} &= S(AE) + S(BE) - S(F) - S(E)\nonumber\\
        &= S(AE) + S(AF) - S(F) - S(E)\nonumber\\
        &=S(A|E) + S(A|F) \geq 2 S(A|EF).
    \end{align}
    The last inequality follows from strong subadditivity. However $S(A|EF) = S(AEF) - S(EF) = S(B) - S(AB) = 1-h_2((1+c)/2) >0$, as $c>0$. 

    In the next stage of the proof, we construct another extension of $\rho_c$ on $\Tilde{E}$ with $\dim \Tilde{E} = 3D + 2$. First, consider the state
    \begin{align}
        \ket{\phi}_{ABGH} = \ket{00}_{AB} | P_0\rangle\!\rangle_{GH} +\ket{11}_{AB} | P_1\rangle\!\rangle_{GH}, 
    \end{align}
    where $A,B$ are two qubits and $\dim G = \dim H = 3D$, and 
    \begin{align}
        P_i = \frac{1}{\sqrt{2}} \left(\sqrt{M_iM_i^\dagger} \oplus \sqrt{M_i^\dagger M_i}\right).
    \end{align}
    Now $P_i=P_i^\dagger \geq 0$, $\Tr (P_i^\dagger P_i) = \Tr (P_i^2) = \frac{1}{2}$, and $0\leq \Tr (P_0^\dagger P_1) \leq \sqrt{\Tr(P_0^2)\Tr(P_1^2)} = \frac{1}{2}$. Therefore, the reduced state $\phi_{AB} = \rho_{c'}$ with $c' = 2\Tr (P^\dagger_0P_1)$, i.e., $0\leq c' \leq 1$. Using Lemma~\ref{lem: tr-ineq} for the matrices $M_0$ and $M_1$, we have that $c' = 2\Tr(P^\dagger_0P_1) \geq 2 |\Tr (M_0M^\dagger_1)| = c$. 
    
    Furthermore, we claim that $I(A:B\; |\; E)_{\omega} = I(A:B\mid G)_{\phi}$. Indeed, the relevant reduced states are
    \begin{align}
        \phi_{AG} &= \ketbra{0}{0}_A \otimes \frac{1}{2}\left(M_0M_0^\dagger \oplus M_0^\dagger M_0\right) + \ketbra{1}{1}_A\otimes \frac{1}{2}\left(M_1M_1^\dagger \oplus M_1^\dagger M_1\right),\nonumber\\
        \phi_{BG} &= \ketbra{0}{0}_B \otimes \frac{1}{2}\left(M_0M_0^\dagger \oplus M_0^\dagger M_0\right) + \ketbra{1}{1}_B\otimes \frac{1}{2}\left(M_1M_1^\dagger \oplus M_1^\dagger M_1\right),\nonumber\\
        \phi_G &= \frac{1}{2}\left[(M_0M_0^\dagger+M_1M_1^\dagger )\oplus (M_0^\dagger M_0 + M_1^\dagger M_1)\right],\nonumber\\
        \phi_H &= \frac{1}{2}\left[(M_0M_0^\dagger+M_1M_1^\dagger )^T\oplus (M_0^\dagger M_0 + M_1^\dagger M_1)^T\right].
    \end{align}
    Now, \begin{align}
        S(AG) = S(BG) &= S\left(\frac{1}{2}(M_0^\dagger M_0\oplus M_1^\dagger M_1)\oplus \frac{1}{2}(M_0M_0^\dagger\oplus M_1M_1^\dagger)\right)\nonumber\\
        & = 1 + \frac{1}{2}S(M_0^\dagger M_0\oplus M_1^\dagger M_1) + \frac{1}{2}S(M_0 M_0^\dagger\oplus M_1 M_1^\dagger)\nonumber\\
        & = 1 + S(M_0^\dagger M_0\oplus M_1^\dagger M_1),
    \end{align}
    \begin{align}
        S(G) = S(H) &= S\left(\frac{1}{2}(M_0M_0^\dagger+M_1M_1^\dagger )\oplus \frac{1}{2}(M_0^\dagger M_0 + M_1^\dagger M_1)\right)\nonumber\\
        &= 1 + \frac{1}{2}S(M_0M_0^\dagger+M_1M_1^\dagger) + \frac{1}{2}S(M_0^\dagger M_0 + M_1^\dagger M_1).
    \end{align}
    Therefore,
    \begin{align}
        I(A:B\mid G)_\phi &= S(AG) + S(BG) - S(H) - S(G)\nonumber\\
        &= 2 S(M_0^\dagger M_0\oplus M_1^\dagger M_1) - S(M^\dagger_0 M_0 + M^\dagger_1M_1) - S(M_0M^\dagger_0 + M_1M^\dagger_1)\nonumber\\
        &= I(A:B\mid E)_\omega.
    \end{align}
It is worthwhile to note that in terms of $P_0$ and $P_1$, the CMI is given by 
\begin{align}
    I(A:B\mid G)_\phi = 2S(P_0^2\oplus P_1^2) - 2S(P_0^2+P_1^2).
\end{align}

To construct an extension of $\rho_c$ which attains a lower CMI than $I(A:B\mid E)_{\omega}$, we distinguish three cases: (i) $c' > c$, (ii) $c'=c$ with $\ker P_0 \neq \ker P_1$, and (iii) $c'=c$ with $\ker P_0 = \ker P_1$.

\vspace{4mm}
\noindent
\textbf{Case 1 ($c'>c$):} Any dephased Bell state with a lower coherence can be obtained by an appropriate convex mixture of a dephased Bell state with a higher coherence and a completely dephased Bell state. In our case
\begin{align}
    \rho_c = \frac{c}{c'}\rho_{c'} + \left(1- \frac{c}{c'}\right) \rho_{0}.
\end{align}
Take the extension $\eta_{ABG'} = \frac{1}{2}(\ketbra{00}{00}_{AB} \otimes \ketbra{0}{0}_{G'}+ \ketbra{11}{11}_{AB}\otimes \ketbra{1}{1}_{G'})$ of $\rho_0$ on some $G'\cong \mathbb{C}^2$. We then obtain an extension of $\rho_{c}$ on $\Tilde{E} =  G\oplus G'\cong \mathbb{C}^{3D+2}$
\begin{align}
    \tilde{\omega}_{AB\Tilde{E}} = \frac{c}{c'}\phi_{ABG} \oplus \left(1-\frac{c}{c'}\right)\eta_{ABG'}.
\end{align}
The CMI on this extension evaluates to
\begin{align}
    I(A:B\mid \tilde{E})_{\tilde{\omega}} &= \frac{c}{c'}I(A:B\mid G)_{\phi} + \left(1-\frac{c}{c'}\right)\underbrace{I(A:B\mid G')_{\eta}}_{=0}\nonumber\\
    &= \frac{c}{c'}I(A:B\mid E)_\omega < I(A:B\mid E)_\omega.
\end{align}
Thus, in this case, we have constructed an extension of $\rho_c$ on $(3D+2)$-dimension that has a strictly lower CMI than the extension we started with on $D$-dimension.

\vspace{4mm}
\noindent
\textbf{Case 2 ($c'=c$ with $\ker P_0 \neq \ker P_1$):} Without loss of generality, we assume that there exists a unit vector $\ket{v}\in \ker P_0$ such that $a:= \bra{v}P_1\ket{v} >0$. We will add a small support to $P_0$ along $\ket{v}$ which makes $c'$ slightly larger than $c$. Explicitly, we define $P_\epsilon = \frac{P_0+\epsilon \ketbra{v}{v}}{\sqrt{1+2\epsilon^2}}$. The normalisation is chosen to ensure that $\Tr(P_{\epsilon}^2) =\frac{1}{2}$. We have,
\begin{align}
    c'_\epsilon = 2\Tr(P_\epsilon P_1) = \frac{c + 2a\epsilon}{\sqrt{1+2\epsilon^2}} = c +2a\epsilon + o(\epsilon).
\end{align}
For sufficiently small $\epsilon$, $c'_\epsilon>c$. The CMI is given by
\begin{align}
    I_\epsilon = 2S(P_\epsilon^2\oplus P_1^2) - 2S(P_\epsilon^2+P_1^2).
\end{align}
As, $P^2_\epsilon-P_0^2 = \frac{\epsilon^2(\ketbra{v}{v}-2P_0^2)}{1+2\epsilon^2}$ and $\ket{v}\in \ker P_0$, the following two trace distance calculation follows immediately: $\frac{1}{2}\norm{(P_\epsilon^2\oplus P_1^2) - (P_0^2\oplus P_1^2)}_1 = \frac{\epsilon^2}{1+2\epsilon^2} = \frac{1}{2}\norm{(P_\epsilon^2 + P_1^2) - (P_0^2 + P_1^2)}_1$. Then using the Audenaert-Fannes' continuity bound for the von Neumann entropy~\cite{Audenaert_2007}, we obtain
\begin{align}
    \abs{I_\epsilon - I(A:B\mid G)_{\phi}} \leq 2\frac{\epsilon^2}{1+2\epsilon^2} [\log (6D-1)+\log (3D-1)] + 4h_2\left(\frac{\epsilon^2}{1+2\epsilon^2}\right).
\end{align}
For sufficiently small positive $\epsilon$, the right hand side is bounded by a constant multiple of $\epsilon^2 \log(1/\epsilon)$ due to the dominant binary entropy term. Furthermore, as $\epsilon\log(1/\epsilon)\to 0$ as $\epsilon\to 0$, the right hand side vanishes faster than $\epsilon$ as $\epsilon\to 0$, i.e., we can write that $I_\epsilon \leq I(A:B\mid G)_{\phi} + o(\epsilon)$.

Continuing with the construction as in the Case 1, we can write the convex mixture, $\rho_c = \frac{c}{c_{\epsilon}'}\rho_{c_{\epsilon}'} + \left(1- \frac{c}{c_{\epsilon}'}\right) \rho_{0}$. Using an extension of $\rho_c$ on $\Tilde{E}$, a $(3D+2)$-dimensional space, the new CMI evaluates to
\begin{align}
    I(A:B\mid \tilde{E})_{\tilde{\omega}} = \frac{c}{c_{\epsilon}'}I_\epsilon &\leq \left(1+\frac{2a\epsilon}{c} + o(\epsilon)\right)^{-1}(I(A:B\mid G)_{\phi} + o(\epsilon))\nonumber\\
    &= I(A:B\mid G)_{\phi} - \frac{2a\epsilon}{c}I(A:B\mid G)_{\phi} + o(\epsilon).
\end{align}
Therefore, with sufficiently small positive $\epsilon$, we can ensure that $I(A:B\mid \tilde{E})_{\tilde{\omega}} < I(A:B\mid G)_{\phi} = I(A:B\mid E)_\omega$. This finishes the desired construction for this case.

\vspace{4mm}
\noindent
\textbf{Case 3 ($c'=c$ with $\ker P_0 = \ker P_1$):} In this case, we redefine $P_i$'s as their restriction to their support. Therefore, the new $P_i$'s are positive definite operators acting on an $N\leq 3D$-dimensional space, $G$. Recall that the state $\rho_c$ has support in a two dimensional subspace, $\mathcal{M}\subset {A\otimes B}$. Hence the extension $\phi_{ABG}$ lives in a $2N$ dimensional subspace $\mathcal{M}\otimes G$. However, in the basis $\{\ket{00},\ket{11}\}$, we can write $\phi_{ABG}$ in the following block matrix form:
\begin{align}
    \phi_{ABG} = \begin{pmatrix}
        P_0^2 & P_0 P_1\\
        P_1P_0 & P_1^2
    \end{pmatrix} = \begin{pmatrix}
        P_0\\
        P_1
    \end{pmatrix} \begin{pmatrix}
        P_0 & P_1
    \end{pmatrix}.
\end{align}
Therefore on one hand, $\mathrm{rank}\; \phi_{ABG} = \mathrm{rank}\; P_i = N < \dim (\mathcal{M} G)=2N$. On the other hand $\mathrm{rank}\; \phi_{AG} = \mathrm{rank} \; \phi_{BG} = 2N = \dim (AG)=\dim (BG)$ and $\mathrm{rank}\; \phi_G = N = \dim G$.

Now for some $0<t<1$, consider $\omega^t_{ABG} = (1-t)\phi_{ABG} + t\rho_c\otimes \frac{\mathbb{I}_G}{N}$. Note that $\omega^t_{ABG}$ is an extension of $\rho_c$ and $\rho_c\otimes \frac{\mathbb{I}_G}{N}$ is full rank on $\mathcal{M}G$. Denote by $K$ the projector onto the kernel of $\phi_{ABG}$ in $\mathcal{M}G$, and let $b:= \Tr[K(\rho_c\otimes \frac{\mathbb{I}_G}{N})]>0$. Then the log-singularity phenomenon of entropy dictates that $S(ABG)_{\omega^t} = S(ABG)_\phi + b\; t\log (1/t) + O(t)$, whereas $S(AG)_{\omega^t} = S(AG)_\phi + O(t)$, $S(BG)_{\omega^t}= S(BG)_{\phi} + O(t)$, and $S(G)_{\omega^t} = S(G)_\phi + O(t)$. Therefore, the CMI is evaluated as
\begin{align}
    I(A:B\mid G)_{\omega^t} &= S(AG)_{\omega^t} + S(BG)_{\omega^t} - S(ABG)_{\omega^t}- S(G)_{\omega^t}\nonumber\\
    &= S(AG)_\phi + S(BG)_{\phi} - S(ABG)_\phi - S(G)_\phi - b\; t\log (1/t) + O(t)\nonumber\\
    &= I(A:B\mid G)_{\phi} - b\; t\log (1/t) + O(t).
\end{align}
Hence, for sufficiently small $t$, we can ensure that $I(A:B\mid G)_{\omega^t} < I(A:B\mid G)_{\phi} = I(A:B\mid E)_\omega$. Thus, we have found an extension of $\rho_c$ on an
$N$-dimensional conditioning system, where $N\leq 3D+2$, whose CMI is strictly lower than that of the
extension with which we started.

For a fixed finite dimension $D$, compactness of the set of states and continuity of CMI implies that $E_D(\rho_c)$ is attained. Applying the preceding construction to the minimizing $D$-dimensional extension produces an extension with conditioning
dimension at most $3D+2$ while attaining a strictly lower CMI. Therefore, $E_{3D+2}(\rho_c)<E_D(\rho_c)$ for every
$D\geq1$.
\end{proof}

\section{Conclusion}
We have shown that there is no cardinality bound for squashed entanglement. In fact for a partially dephased two-qubit maximally entangled state, we see that the squashed entanglement cannot be achieved in any finite dimensional extension. Explicitly we argue that starting from any extension of this state on $D$-dimension, we can construct an extension on $3D+2$-dimensions that achieves a strictly lower conditional mutual information. 

\section{Acknowledgements}
The counterexample and the argument was found in interaction with LLM. The paper is written by hand, and reflects our human understanding of the solution. This work is
supported under NSERC-NSF alliance grant ALLRP-586858-2023 and NSERC Discovery grant
RGPIN-2025-02094. RGA acknowledges the support of Institue for Quantum Computing and Mike and Ophelia Lazaridis Graduate Fellowship.

\printbibliography
\end{document}